\newcommand{\commentout}[1]{}
\definecolor{nblack}{rgb}{0,0,0}
\definecolor{nblue}{rgb}{0.2,0.2,0.7}
\definecolor{nred}{rgb}{0.7,0.2,0}
\newtheorem*{definition}{Definition}
\newtheorem{theorem}{Theorem}
\begin{document}

\title{Experimental refutation of a class of $\psi$-epistemic models}
\author{M. K. Patra$^{1}$, L. Olislager$^{2}$, F. Duport$^{2}$, J. Safioui$^{2}$, S. Pironio$^{1}$, and S. Massar$^{1}$}
\affiliation{$^{1}$Laboratoire d'Information Quantique, CP 225, Universit{\'e} libre de Bruxelles,\\Av. F. D. Roosevelt 50, B-1050 Bruxelles, Belgium\\
$^{2}$OPERA-Photonique, CP 194/5, Universit{\'e} libre de Bruxelles,\\Av. F. D. Roosevelt 50, B-1050 Bruxelles, Belgium}
\begin{abstract}
The quantum state $\psi$ is a mathematical object used to determine the outcome probabilities of measurements on physical systems. Its fundamental nature has been the subject of discussions since the origin of the theory: is it \emph{ontic}, that is, does it correspond to a real property of the physical system? Or is it \emph{epistemic}, that is, does it merely represent our knowledge about the system? Recent advances in the foundations of quantum theory show that epistemic models that obey a simple continuity condition are in conflict with quantum theory already at the level of a single system. Here we report an experimental test of continuous epistemic models using high-dimensional attenuated coherent states of light traveling in an optical fibre. Due to non-ideal state preparation (of coherent states with imperfectly known phase) and non-ideal measurements (arising from losses and inefficient detection), this experiment only tests epistemic models that satisfy additional constraints which we discuss in detail. Our experimental results are in agreement with the predictions of quantum theory and provide constraints on a class of $\psi$-epistemic models.
\end{abstract}
\maketitle

\section{Introduction}

Most quantum textbooks begin their exposition by postulating that to every physical system corresponds an abstract mathematical object --a ray in Hilbert space-- called the \emph{quantum state}. But does the quantum state correspond to a real physical property of the system? A major reason for doubting is that the quantum state cannot be observed directly. Indeed it can only be reconstructed indirectly by carrying out measurements on ensembles of identically prepared systems \cite{2004-Paris,1993-Aharonov}. Alternatively the quantum state could represent only an observer's knowledge of the physical system, rather than a physical reality. Such an epistemic interpretation of the quantum state could provide an intuitive explanation for many quantum phenomena, such as the measurement postulate and wavefunction collapse \cite{2002-Caves,2003-Brukner,2007-Spekkens}.

Harrigan and Spekkens formulated with precision the above alternatives \cite{2010-Harrigan}. Their starting point is the assumption that every quantum system possesses a real physical state, generally called the ontic state, denoted $\lambda$. The ontic state determines the probabilities of measurement outcomes. When an ensemble of  systems is prepared, different members of the ensemble may be in different ontic states $\lambda$. A preparation procedure $Q$ is therefore associated to a probability distribution $P(\lambda|Q)$ over the ontic states. When a measurement is carried out on a system in ontic state $\lambda$, the probability to obtain outcome $r$ is $P(r|M,\lambda)$. Therefore if preparation $Q$ is followed by measurement $M$ the probability of outcome $r$ is
\begin{equation}
P(r|M,Q) = \sum_{\lambda}  P(r|M,\lambda) P(\lambda|Q) \, .
\end{equation}
These models reproduce the predictions of quantum theory if $P(r|M,Q) = \langle\psi_Q|\mathcal{M}_r|\psi_Q\rangle$, where $\mathcal{M}_r$ is the quantum operator corresponding to the outcome $r$  and $\psi_Q$ is the quantum state assigned by quantum theory to the preparation $Q$.

Following \cite{2010-Harrigan} one can distinguish two classes of models. A model is $\psi$\emph{-ontic} if the preparation of distinct pure quantum states always gives rise to distinct real states. That is, for every $\lambda$ either $P(\lambda|Q) = 0$ or $P(\lambda|Q') = 0$ if the preparations $Q$ and $Q'$ correspond to different quantum states $|\psi_Q\rangle \neq |\psi_{Q'}\rangle$. Every real state $\lambda$ is thus compatible with a unique pure quantum state. The quantum state is ``encoded'' in $\lambda$ and we can consider it to represent a real property of the system.

A model is $\psi$\emph{-epistemic} if the preparation of distinct pure quantum states may result in the same real state $\lambda$. That is, there exist preparations $Q$ and $Q'$ corresponding to non-identical quantum states $|\psi_Q\rangle \neq |\psi_{Q'}\rangle$ such that both $P(\lambda|Q) > 0$ and $P(\lambda|Q') > 0$ for some $\lambda$. In this case, the quantum state is not uniquely determined by the underlying real state.

A consistent formulation of $\psi$-epistemic models would constitute a conceptual revolution of quantum mechanics. However it was recently shown that $\psi$-epistemic models that obey certain natural conditions cannot reproduce the predictions of quantum theory. Pusey, Barrett, and Rudolph (PBR) showed that $\psi$-epistemic models cannot reproduce the predictions of quantum theory if they satisfy the property, termed \emph{preparation independence}, that independently prepared pure quantum states correspond to product distributions over ontic states \cite{2012-Pusey}. A similar result was obtained in \cite{2012-Patra} using a simple argument that relies on a natural assumption of continuity. 
Furthermore, the approach of \cite{2012-Patra} shows that even for a single quantum system $\psi$-epistemic models are strongly constrained. These results show that $\psi$-epistemic models that reproduce the predictions of quantum theory must be both strongly discontinuous and assign a collective ontic state to independently prepared systems. A toy $\psi$-epistemic model that reproduces the predictions of quantum theory and that satisfies these constraints was exhibited in \cite{2012-Lewis}. Additional recent theoretical results presenting no-go theorems for classes of $\psi$-epistemic models have been reported in \cite{2012-Schlosshauer,2013-Hardy,2013-Miller,2012-Maroney,2012-Leifer,2013-Aaronson}.

All these no-go theorems suggest the possibility of novel experiments in the foundations of quantum mechanics. Here we report an experiment that tests the existence of the continuous $\psi$-epistemic models studied in \cite{2012-Patra}. Simultaneously with our work, an experimental test of $\psi$-epistemic models based on the original PBR no-go theorem using two ions in the same trap was reported in \cite{2012-Nigg}.

The interest of basing the experimental test on the no-go theorem of \cite{2012-Patra} is that continuous $\psi$-epistemic models make predictions that are in conflict with quantum theory already at the level of a single system. This greatly facilitates experimental tests. (Constraints on $\psi$-epistemic models at the level of a single system were derived independently in \cite{2013-Hardy}, see the appendix of \cite{2012-Patra} for the relation between the two approaches).

Our experiment uses attenuated coherent states of light traveling in an optical fibre. The coherent state is decomposed into time bins, which provide convenient realisations of high-dimensional Hilbert spaces. 
Specifically we test the predictions of epistemic models in dimensions $3,\,10,\,30,\,50,\,80$. Due to non-ideal state preparation (uncontrolled phase drifts of the laser used) and non-ideal measurements (losses and detector inefficiency), our experiment suffers from ``loopholes''. That is, the observed data could be explained by continuous epistemic models that exploit these loopholes. We argue that if the continuous epistemic model satisfies some reasonable additional hypotheses, then it cannot exploit these loopholes. For instance in the case of detection loophole, we have to make an assumption similar to the fair-sampling assumption often made in non-locality experiments. And in the case of uncertainty in the preparation procedure, we must make the hypothesis that the ontic state does not depend on whether or not control measurements are included in the preparation procedure. Our experimental results are in agreement with the predictions of quantum theory and provide constraints on $\psi$-epistemic models that satisfy these additional constraints. More precisely, we parameterize $\psi$-epistemic models by two parameters, $\delta_0$ that describes how continuous the model is, and $\epsilon$ that describes how epistemic it is. Our experimental results rule out a large class of models labeled by these two parameters.

\section{No-go theorem for\\continuous $\psi$-epistemic models}

As our experiment is based on \cite{2012-Patra}, we recall the relevant results. We first define continuous $\psi$-epistemic models (see fig.~\ref{fig:model} for a depiction of the difference between $\psi$-ontic and $\delta$-continuous $\psi$-epistemic models, and the relevant geometry of the Hilbert space).

\begin{figure}
\begin{center}
\includegraphics[scale=.45]{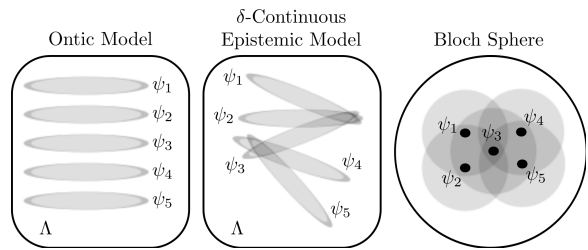}
\caption{Illustration of $\psi$-ontic and $\delta$-continuous $\psi$-epistemic models. Depicted (left and center) is the space $\Lambda$ of ontic states, as well as the support of the probability distribution $P(\lambda|Q_k)$ for preparation $Q_k$ associated to distinct pure states $\psi_k$, $k=1,\ldots,5$. In $\psi$-ontic models (left) distinct quantum states give rise to probability distribution $P(\lambda|Q_k)$ with no overlap. In $\delta$-continuous $\psi$-epistemic models (center),  states that are close to each other (such as $\{\psi_1,\psi_2,\psi_3\}$ and $\{\psi_3, \psi_4,\psi_5\}$) all share common ontic states. However states that are further from each other (such as $\psi_1$ and $\{\psi_4,\psi_5\}$) do not necessarily have common ontic states $\lambda$. On the right we represent the relationship between the states in Hilbert space. We have depicted by dots the positions of the states on the Bloch sphere, and in grey around each state the sphere of radius $\delta$. For instance the sphere of radius $\delta$ around state $\psi_1$ includes states $\psi_2$ and $\psi_3$, but not states $\psi_4$ and $\psi_5$. In $\delta$-continuous epistemic models all states in the ball of radius $\delta$ share at least one ontic state.
}
\label{fig:model}
\end{center}
\end{figure}

\begin{definition} [$\delta$-continuity \cite{2012-Patra}]
Let $\delta>0$ and let $B_\psi^\delta$ be the ball of radius $\delta$ centred on $|\psi\rangle$, i.e., $B_\psi^\delta$ is the set of states $|\phi\rangle$ such that
\begin{equation}
|\langle\phi|\psi\rangle| \geq 1-\delta \, .
\label{eq:delta}
\end{equation}
We say that a model is $\delta$-continuous if for any preparation $Q$, there exists an ontic state $\lambda$ (which can depend on $Q$) such that for all preparations $Q'$ corresponding to quantum states $|\phi_{Q'}\rangle$ in the ball $B^\delta_{\psi_Q}$ centred on the state $|\psi_Q\rangle$, we have $P(\lambda|Q')>0$.
\end{definition}

In order to motivate this definition, recall that according to the definition of $\psi$-ontic and $\psi$-epistemic models, we assign an ontic status to $\psi$ if a variation of $\psi$ necessarily implies a variation of the underlying reality $\lambda$, and we assign it an epistemic status if a variation of $\psi$ does not necessarily imply a variation of the reality $\lambda$. It is then natural to assume a form of continuity for $\psi$-epistemic models: a slight change of $\psi$ induces a slight change in the corresponding ensemble of $\lambda$'s, in such a way that at least some $\lambda$'s from the initial ensemble will also belong to the perturbed ensemble. The above is a slightly stronger form of continuity which asserts that there are real states $\lambda$ in the initial ensemble that will remain part of the perturbed ensemble, no matter how we perturb the initial state, provided this perturbation is small enough.

If $\psi$-epistemic models reproduce the predictions of quantum theory, then there is a fundamental constraint on their $\delta$-continuity. This constraint holds even at the level of a single quantum system.

\begin{theorem} [No-go theorem for $\delta$-continuous models \cite{2012-Patra}]
$\delta$-continuous $\psi$-epistemic models with $\delta \geq 1-\sqrt{(d-1)/d}$ cannot reproduce all the measurement statistics of quantum states in a Hilbert space of dimension $d$.
\end{theorem}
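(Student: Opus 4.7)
The plan is to derive a contradiction from the assumption that a $\delta$-continuous model with $\delta \geq 1 - \sqrt{(d-1)/d}$ reproduces all quantum statistics. I will exhibit a single ``ball-centre'' preparation $Q_\chi$ associated with a quantum state $|\chi\rangle$, a single projective measurement $M$, and $d$ auxiliary preparations $Q_{\psi_1},\dots,Q_{\psi_d}$ whose states $|\psi_k\rangle$ all sit inside $B^\delta_\chi$. The joint consistency conditions imposed on the ontological response function will then overdetermine the outcome probabilities and force a contradiction.

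Construction. Fix an orthonormal basis $\{|e_1\rangle,\dots,|e_d\rangle\}$ of the $d$-dimensional Hilbert space and let $M$ be the projective measurement in this basis with outcomes $r=1,\dots,d$. Define $|\chi\rangle = \tfrac{1}{\sqrt{d}} \sum_{j=1}^d |e_j\rangle$ and, for each $k$, $|\psi_k\rangle = \tfrac{1}{\sqrt{d-1}} \sum_{j \neq k} |e_j\rangle$. A direct check shows that $\langle e_k | \psi_k \rangle = 0$ and that $|\langle \chi | \psi_k \rangle| = \sqrt{(d-1)/d} \geq 1 - \delta$, so each $|\psi_k\rangle$ lies in the ball $B^\delta_\chi$.

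Deriving the contradiction. Applying the $\delta$-continuity hypothesis to $Q_\chi$ yields an ontic state $\lambda^\star$ such that $P(\lambda^\star | Q_{\psi_k}) > 0$ for every $k$. On the other hand, the Born rule combined with the ontological decomposition (1) gives $0 = P(r = k | M, Q_{\psi_k}) = \sum_\lambda P(r = k | M, \lambda)\, P(\lambda | Q_{\psi_k})$, so the response probability $P(r = k | M, \lambda)$ must vanish on every $\lambda$ in the support of $P(\cdot | Q_{\psi_k})$, in particular at $\lambda^\star$. Hence $P(r = k | M, \lambda^\star) = 0$ for each of the $d$ outcomes, which contradicts the normalisation $\sum_{k=1}^d P(r = k | M, \lambda^\star) = 1$.

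The main subtlety is justifying that this construction really is tight, so that the threshold $\delta = 1 - \sqrt{(d-1)/d}$ is the one dictated by the geometry and cannot be sharpened by a cleverer choice. The key observation is that for any candidate ball-centre $|\chi\rangle$ there must exist an index $k$ with $|\langle \chi | e_k \rangle|^2 \geq 1/d$, so the largest overlap between $|\chi\rangle$ and any unit vector orthogonal to $|e_k\rangle$ is $\sqrt{1 - 1/d}$; the uniform superposition above saturates this bound simultaneously for every $k$. Beyond this optimality check the remaining steps are routine manipulations of the Born rule and of the defining property of $\delta$-continuity, so I do not anticipate further obstacles.
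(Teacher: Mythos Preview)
Your proof is correct and uses the same construction as the paper: the centre state $|\chi\rangle=\tfrac{1}{\sqrt d}\sum_j|e_j\rangle$, the $d$ ``one-missing'' states $|\psi_k\rangle=\tfrac{1}{\sqrt{d-1}}\sum_{j\neq k}|e_j\rangle$, and the projective measurement in the $\{|e_j\rangle\}$ basis. The only difference is in the endgame. You pick the single ontic state $\lambda^\star$ guaranteed by $\delta$-continuity, deduce $P(r=k|M,\lambda^\star)=0$ for every $k$ from the vanishing Born probabilities, and contradict normalisation directly. The paper instead introduces the overlap measure $\epsilon=\sum_\lambda\min_k P(\lambda|Q_k)>0$ and derives the inequality $\sum_k P(k|M,Q_k)\geq\epsilon$, which is then contradicted by the quantum value zero. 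Your version is slightly cleaner for the bare no-go statement; the paper's version is phrased so as to produce the experimentally testable inequality~(\ref{eq:nogo}) that the rest of the paper is devoted to measuring, so the quantitative parameter $\epsilon$ is the point rather than an intermediate step.

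One small remark: your final paragraph on ``tightness'' is not needed for the theorem as stated (which is a one-sided bound) and, as written, only shows optimality of the configuration for this particular measurement $M$, not that the threshold $1-\sqrt{(d-1)/d}$ cannot be improved by some other choice of states and measurement. It does no harm, but you may want to either drop it or be explicit about what optimality is being claimed.
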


We give the proof of this result, as the construction used is the basis for the experimental test reported below.

\begin{proof}
Consider $d$ preparations $Q_k$ ($k=1,\ldots,d$) corresponding to distinct quantum states $|\psi_k\rangle$ all contained in a ball of radius $\delta$. By definition of a $\delta$-continuous model, there is at least one $\lambda$ for which $\min_{k}P(\lambda|Q_k) > 0$ and thus
\begin{equation}
\epsilon \equiv \sum_\lambda \min_{k}P(\lambda|Q_k) > 0
\label{eq:epsilon}
\end{equation}
(this last quantity can be viewed as a measure of the extent to which distributions over real states overlap in the neighbourhood of a given quantum state). Consider now a measurement $M$ that yields one of $d$ possible outcomes $r=1,\ldots,d$. A $\delta$-continuous model then makes the prediction
\begin{eqnarray} 
\sum_{k}P(k|M,Q_k)
&=& \sum_{k}\sum_{\lambda}P(k|M,\lambda)P(\lambda|Q_k) \nonumber \\
&\geq& \sum_{k}\sum_{\lambda}P(k|M,\lambda)\min_{k}P(\lambda|Q_k) \nonumber \\
&=& \sum_{\lambda}\min_{k}P(\lambda|Q_k) = \epsilon > 0 \, .
\label{eq:nogo}
\end{eqnarray}
According to quantum theory, however, there exist states in a Hilbert space of dimension $d$ contained in a ball of radius $\delta = 1-\sqrt{(d-1)/d}$ such that the left-hand side of eq.~(\ref{eq:nogo}) is equal to $0$. To show this, let $\{|j\rangle \,:\, j=1,\ldots,d\}$ be a basis of the Hilbert space. Consider the $d$ distinct states $|\psi_k\rangle = \frac{1}{\sqrt{d-1}}\sum_{j\neq k}|j\rangle$. These states are all at mutual distance
\begin{equation}
\delta = 1-|\langle\psi_k|\psi\rangle| = 1-\sqrt{(d-1)/d} = \frac{1}{2d} + O\left(\frac{1}{d^2}\right)
\end{equation}
from the state $|\psi\rangle = \frac{1}{\sqrt{d}}\sum_{j}|j\rangle$. Let the measurement $M$ be the measurement in the basis $\{|j\rangle\}$. Then $P(k|M,Q_k) = 0$ for all $k=1,\ldots,d$ and thus $\sum_{k}P(k|M,Q_k) = 0$.
\end{proof}

The above definition and theorem lead us to define a class of $\psi$-epistemic models whose existence can be tested experimentally. They are labeled by the two parameters, $\delta$ and $\epsilon$, that come up in the key equations (\ref{eq:delta}) and (\ref{eq:epsilon}).

\begin{definition} [$\delta\epsilon$-$\psi$-epistemic models]
A $\delta\epsilon$-$\psi$-epistemic model is such that for any set of preparations $Q_k$ corresponding to distinct quantum states $|\psi_k\rangle$ all contained in a ball of radius $\delta$,
\begin{equation}
\sum_\lambda \min_{k}P(\lambda|Q_k) \geq \epsilon \, .
\end{equation}
\end{definition}

\section{Experimental setup}
\label{sec:setup}

In our experiment we realise good approximations of the states $|\psi_{k}\rangle$ using coherent states of light traveling in an optical fibre. The basis states $|j\rangle$ correspond to a photon localised at equally spaced positions in the optical fibre. Such states are called \emph{time bins} as they are conveniently labeled by the time $t_{j}=j\tau$ at which they are detected, where $\tau$ is the spacing (in time) between the time bins, taken to be much larger than the time resolution of the single-photon detectors. (Time bins have been extensively used in quantum optics, and in particular in quantum key distribution, see e.g. \cite{1998-Ribordy,2005-Stucki}). We use up to 80 time bins, leading to a very sensitive experiment, since as follows from proof of theorem~1, the bound on the continuity parameter $\delta$ decreases when the dimensionality $d$ increases (although this conclusion must be somewhat tempered, see section~\ref{sec:results}). We then measure the time of arrival of the photon, which tells us in which bin $k$ the photon is present, and thus provides us with an experimental value for the quantity $\sum_k P(k|M,Q_k)$ which appears on the left-hand side of eq.~(\ref{eq:nogo}).

Our experimental setup is schematised in fig.~\ref{fig:setup}. It is realised using fibre-pigtailed components in the telecommunication C-band. The laser source (Koheras AdjustiK) continuously emits 1\,mW of power at 1549.4\,nm into an optical fibre. Its narrow spectral linewidth ($\Delta\nu\sim1\,$kHz when measured during $120\,\mu$s, as specified by the manufacturer) corresponds to a coherence time $\tau_{\mathrm{coh}}=\left(2\pi\Delta\nu\right)^{-1}\sim160\,\mu$s significantly longer than all other time scales in the experiment. (An upper bound on the laser linewidth was obtained in our group: this laser was previously used in \cite{2010-Leo} and success of this experiment required that the linewidth of the laser be at most $\Delta\nu \leq 20\,$kHz). Power fluctuations of the laser are less than 0.4\,dB in 10\,hours, and side lobes are rejected by more than 63\,dB. This guarantees that within the time interval used to produce the train of pulses the source emits a coherent state of well-defined photon mean number. We then create a pulse train from the cw laser output using an acousto-optic modulator (AOM). Trains of $d=3,\,10,\,30,\,50,\,80$ pulses with one missing are thus created. The AOM (model M111-2J-FxS from Gooch and Housego) has a 25\,ns rise/fall time and a 50\,dB extinction ratio in cw regime. In the pulsed regime used in the experiment the extinction ratio is estimated to be $\mathit{Ext}=(40\pm1.5)\,$dB. A pattern generator (Hewlett-Packard model 81110A) drives the AOM. The AOM is turned on for 100\,ns and then off for 200\,ns. To attenuate the state we use a fixed optical attenuator complemented with a variable one (Hewlett-Packard model 8156A). An overall attenuation of up to 120\,dB with an absolute precision of 0.1\,dB and a repeatability of 0.01\,dB can be achieved. The mean number of photons in the pulse train is $\langle n \rangle = 0.2$. To complete the state preparation phase, the light is sent through a fibre spool long enough to store the complete pulse train. The photons are detected with a superconducting single-photon detector (SSPD from Scontel) cooled to $(1.7\pm0.1)\,$K with overall efficiency (including losses in optical components after the state preparation and data acquisition inefficiency) $\eta=(4\pm0.2)\,\%$ and dark-count rate $\mathit{Dk}=(3\pm1)\,$Hz. The dark-count rate can be measured with high precision, but it is sensitive to environmental conditions (temperature of the detectors, ambient light) that fluctuate during the experiment, which is the reason the quoted error is large. The data acquisition is realised by a time-to-digital-converter (Agilent Acqiris system). The overall time resolution of the detector and data acquisition is approximately 150\,ps. In order to minimise the effects of the finite rise and fall time of the AOM, we only keep the clicks that occur during an interval of width $T_p=80\,$ns centred on the middle of each time bin. For the different values of dimension 
 $d=3,\,10,\,30,\,50,\,80$ studied, the total number of times each state $\alpha_k$ was produced was $12,\,10,\,4,\,3,\,2 \times 128 \times 10^4$ respectively. The data was acquired over the duration of one week. An example of recorded data is depicted in fig.~\ref{fig:setup}.

\begin{figure}
\begin{center}
\includegraphics[scale=.45]{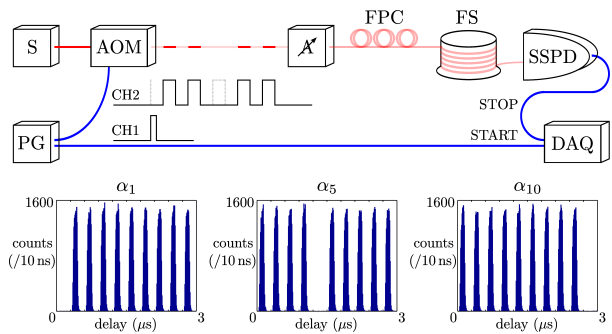}
\caption{Top panel: Experimental setup. Red links are optical fibres and blue links are radio-frequency cables. A continuous laser source (S) emits a coherent state whose intensity is modulated by an acousto-optic modulator (AOM) driven by a pattern generator (PG), yielding a train of $d$ pulses with one missing. The train of pulses is then attenuated to the single-photon regime by passing through an optical attenuator (A). The complete pulse train is stored in a 5\,km fibre spool (FS), and then sent to a superconducting single-photon detector (SSPD). A fibre polarization controller (FPC) is used to ensure maximum sensitivity of the SSPD (which is polarization sensitive). When a photon is detected, an electric signal is sent to a data acquisition system (DAQ) which stores the time at which the detection event has taken place relative to the time at which the state preparation began. Bottom panel: Example of experimental results for $d=10$, and states $|\alpha_{1}\rangle$, $|\alpha_{5}\rangle$, $|\alpha_{10}\rangle$. Horizontal axis: time at which a detection is registered. Vertical axis: number of detections in each 10\,ns time window. Continuous $\psi$-epistemic models would predict a non-zero count rate in the bins which should be empty. The small number of counts which do occur due to detector dark counts and finite extinction ratio of the AOM are not visible on this scale.}
\label{fig:setup}
\end{center}
\end{figure}

\section{Interpretation of the experiment}

In the experiment described above, a long coherence laser operating well above threshold produces at its output a coherent state \cite{2008-Garrison} which is cut into a train of $d$ pulses with one missing, and then attenuated. 
Using the standard notation of quantum optics, this procedure yields the coherent state
\begin{equation}
|\alpha_{k}\rangle \simeq |0\rangle + \alpha \left(\frac{1}{\sqrt{d-1}}\sum_{j\neq k}|j\rangle\right) + O\left(\alpha^2\right) \, ,
\label{eq:coherent}
\end{equation}
where $|0\rangle$ denotes the vacuum state and where for simplicity of notation we omit the contributions from two and more photons (this does not affect the interpretation of the experiment, see below). The mean number of photons is chosen to be $\langle n \rangle = \alpha^2 = 0.2$ for all dimensions $d=3,\,10,\,30,\,50,\,80$ investigated.

Because the states prepared in our experiment have a significant vacuum component, and because of losses and finite detector efficiency, the preparation of a quantum state $|\alpha_k\rangle$ can either give rise to a detection in one of the time bins, or to a \emph{no-click} event wherein no photon is registered. These no-click events affect the interpretation of the experiment.

Furthermore the expression eq.~(\ref{eq:coherent}) supposes an ideal output from the laser source. However lasers fluctuate. For a laser operating well above threshold, the major source of fluctuation is phase drift (responsible for the finite linewidth of the laser). Even though the coherence time of the laser $\tau_{\mathrm{coh}}\sim160\,\mu$s is much longer than the longest pulse train (of length $80\times300\,\mathrm{ns}=24\,\mu$s), we cannot neglect this phase drift. A more precise description of the preparation procedure is therefore that it yields the state
\begin{eqnarray}
|\alpha_{k,\varphi}\rangle
&=& \exp\left[-\alpha^2/2\right] \prod_{j\neq k} \exp\left[\frac{\alpha e^{i\varphi_j} a^\dagger_j}{\sqrt{d-1}}\right] |0\rangle
\label{eq:alphaphi} \\
&\simeq& |0\rangle + \alpha \left(\frac{1}{\sqrt{d-1}}\sum_{j\neq k}e^{i\varphi_j}|j\rangle\right) + O\left(\alpha^2\right) \, ,
\label{eq:coherentphase}
\end{eqnarray}
where $y_j=\varphi_j-\varphi_{j-1}$ should be modeled as independent Gaussian random variables, since the phase fluctuations of a laser are generally modeled as a random walk of the phase \cite{2008-Garrison}.

We now discuss the consequences of imperfect state preparation and measurement for the interpretation of the experiment.

\subsection{Detection loophole}
\label{sec:detection}

As said above, because the states prepared in our experiment have a significant vacuum component, and because of losses and finite detector efficiency, the preparation of a quantum state $|\alpha_k\rangle$ can either give rise to a detection in one of the time bins, or to a \emph{no-click} event when no photon is registered. The latter are significantly more common than the detections. Indeed, use of coherent states with $\langle n \rangle = 0.2$ and overall detection efficiency of approximately $4\%$ (see section~\ref{sec:setup}) yields an overall probability of registering a click of approximately $P(\mathrm{clk}) = 8\cdot10^{-3}$, where $\mathrm{clk}$ is the event that the detector clicks in one of the time bins.

These no-click events affect the interpretation of the experiment. To understand why, remember that the key point of theorem~1 was showing that if there is an ontic state that occurs with positive probability for all states $|\psi_k\rangle$, then one finds a contradiction with quantum theory. But in the presence of no-click events this contradiction no longer holds. Indeed there exists a trivial $\psi$-epistemic model that explains our experimental results in which the ontic states common to all preparations $|\alpha_k\rangle$ only give rise to no-click events. Furthermore, the vacuum component of the states $|\alpha_k\rangle$ affects the interpretation of their mutual scalar product (since states that are almost orthogonal become arbitrarily close to each other when superposed with a sufficiently large vacuum component that does not contribute to the click events).

The basis for generalising the analysis is to distinguish between two classes of ontic states: the set of ontic states denoted $\Lambda_0$ which only give rise to no-click events; and the complementary set $\Lambda_\mathrm{clk} = \Lambda \setminus \Lambda_0$, see fig.~\ref{fig:lambda}. All ontic states belonging to the set $\Lambda_\mathrm{clk}$ give rise to a click with positive probability. If for each preparation $|\alpha_k\rangle$, ontic states belonging to $\Lambda_\mathrm{clk}$ occur with positive probability, then we can apply an analog of theorem~1.

\begin{figure}
\begin{center}
\includegraphics[scale=.45]{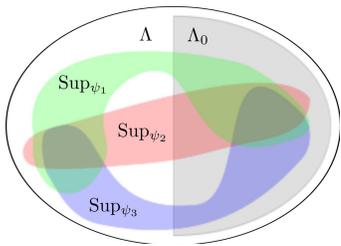}
\caption{Schematic depiction of the structure of the space $\Lambda$ of ontic states in the case of inefficient detectors. We have represented the supports Sup$_{\psi_k}$ of the probability distributions $p(\lambda|\psi_k)$ for three states $|\psi_1\rangle,|\psi_2\rangle,|\psi_3\rangle$. The subset of ontic states which never give rise to a click is denoted by $\Lambda_0$. The experiment described in the main text cannot rule out the existence of a non-empty intersection of the supports Sup$_{\psi_k}$ and of $\Lambda_0$, since all ontic states in this intersection always give rise to no-click events. The reported experiment can rule out the existence of a non-empty intersection of the supports Sup$_{\psi_k}$ and the complementary space $\Lambda_\mathrm{clk} = \Lambda \setminus \Lambda_0$.}
\label{fig:lambda}
\end{center}
\end{figure}

To proceed quantitatively, we first redefine the notion of distance between states to take into account that the vacuum component never gives rise to a click. The new notion of distance $\delta_0$ should have the following properties: (1) it measures the distance between states on the space orthogonal to the vacuum state; and (2) it equals the old distance $\delta_0 = \delta$ on the single-photon space. The exact way $\delta_0$ acts on the two and more photon space is not essential for the argument (since the overlap of the states we consider with the two and more photon space is small). The reason for property (1) is that the vacuum component of the state will not give rise to a click, and hence does not give rise to any measurable quantity. The reason for property (2) is that in the case of single-photon states and perfect detectors we wish to recover the notion of $\delta$-continuity defined above.

\begin{definition} [$\delta_0$-continuity]
A model with no-click events is $\delta_0$-continuous if, for all preparations $Q$ corresponding to pure state $|\psi_Q\rangle$ with $P(\mathrm{clk}|\psi_Q)>0$, there exists an ontic state $\lambda$ (which can depend on $|\psi_Q\rangle)$ such that, for all preparations $Q'$ corresponding to state $|\phi_{Q'}\rangle$ with
\begin{equation}
|\langle\tilde\phi_{Q'}|\tilde\psi_Q\rangle| \geq 1-\delta_0 \, ,
\label{eq:delta0}
\end{equation}
we have $P\left(\lambda|{\phi_{Q'}},\mathrm{clk}\right)>0$, where
\begin{equation}
|\tilde\psi_{Q}\rangle = \frac{\left(I-|0\rangle\langle 0|\right) |\psi_Q\rangle}{\left|\left(I-|0\rangle\langle 0|\right) |\psi_Q\rangle\right|}
\label{eq:tilde}
\end{equation}
is the projection of $|\psi_Q\rangle$ onto the space orthogonal to the vacuum, and $|\tilde\phi_{Q'}\rangle$ is similarly defined.
\end{definition}

As illustration of this new notion of distance, consider the coherent states
\begin{equation}
|\alpha_k\rangle = \exp\left[-\alpha^2/2\right] \exp\left[\frac{\alpha}{\sqrt{d-1}}\sum_{j\neq k}a^\dagger_j\right] |0\rangle \, ,
\label{eq:alphak}
\end{equation}
and the reference state
\begin{equation}
|\alpha_0\rangle = \exp\left[-\alpha^2/2\right] \exp\left[\frac{\alpha}{\sqrt{d}}\sum_{j}a^\dagger_j\right] |0\rangle \, .
\label{eq:alpha0}
\end{equation}
Then we have
\begin{equation}
|\langle\tilde\alpha_k|\tilde\alpha_{0}\rangle| = \frac{e^{\alpha^2\sqrt{(d-1)/d}}-1}{e^{\alpha^2}-1} \, .
\label{eq:deltatilde}
\end{equation}
Therefore the distance $\delta_0$ is given by
\begin{eqnarray} 
\delta_0(d,\alpha^2)
&=& 1-\frac{e^{\alpha^2\sqrt{(d-1)/d}}-1}{e^{\alpha^2}-1}
\nonumber \\
&=& 1-\sqrt{\frac{d-1}{d}} + O\left(\alpha^2\right) \, ,
\label{eq:deltaexpdalpha}
\end{eqnarray}
where we have expanded to leading order in $\alpha^2$. For the experimentally relevant case $\alpha^2 = \langle n \rangle = 0.2$, we find $\delta_0(d,\alpha^2=0.2)\simeq 0.55/d$.

We must also introduce the notion of inefficient detectors.

\begin{definition} [Inefficient detector]
An inefficient detector provides a response that depends only on the photon number. If there is no photon, it does not click. If there are one or more photons, the probability of clicking is strictly positive.
\end{definition}

We now consider the analog of theorem~1 in the case of inefficient detectors. We prove it using coherent states, as these are the relevant ones for our experiment, but it extends trivially to other states.

\begin{theorem} [No-go theorem for $\delta_0$-continuous models]
$\delta_0$-continuous $\psi$-epistemic models with $\delta_0 > 1-\sqrt{(d-1)/d}$ cannot reproduce all the measurement statistics on coherent states of $d$ modes, even in the presence of inefficient detectors.
\end{theorem}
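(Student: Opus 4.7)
The plan is to mimic the structure of Theorem~1 as closely as possible, replacing the single-photon construction by its coherent-state analog and using the projected distance $\delta_0$ and the conditional probabilities $P(\lambda | Q, \mathrm{clk})$ in place of $\delta$ and $P(\lambda | Q)$.

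First, I would take as the $d$ candidate states the coherent states $|\alpha_k\rangle$ defined in eq.~(\ref{eq:alphak}) for $k=1,\ldots,d$, together with the reference state $|\alpha_0\rangle$ of eq.~(\ref{eq:alpha0}). The computation in eq.~(\ref{eq:deltatilde})--(\ref{eq:deltaexpdalpha}) already shows $|\langle\tilde\alpha_k|\tilde\alpha_0\rangle|=1-\delta_0(d,\alpha^2)$ with $\delta_0(d,\alpha^2)\to 1-\sqrt{(d-1)/d}$ as $\alpha^2\to 0$; a short check of the sign of the $O(\alpha^2)$ term shows the convergence is from above, so that for any fixed $\delta_0>1-\sqrt{(d-1)/d}$ I can pick $\alpha^2>0$ small enough to guarantee $\delta_0(d,\alpha^2)<\delta_0$, putting every $|\alpha_k\rangle$ inside the ball $B^{\delta_0}_{\alpha_0}$ in the sense of eq.~(\ref{eq:delta0}). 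By the definition of $\delta_0$-continuity applied to the preparation of $|\alpha_0\rangle$, there must then exist a single $\lambda^\star$ with $P(\lambda^\star|\alpha_k,\mathrm{clk})>0$ for every $k=1,\ldots,d$, and hence $\epsilon \equiv \sum_\lambda \min_k P(\lambda|\alpha_k,\mathrm{clk})>0$.

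Next I would choose the measurement $M$ to be the time-of-arrival (equivalently, per-mode photon detection in the basis $\{|j\rangle\}$) realised by inefficient detectors. Because each detector is assumed to depend only on the photon number in its mode, and the coherent state $|\alpha_k\rangle$ is vacuum in mode $k$, quantum mechanics predicts $P(k|M,\alpha_k)=0$ and therefore $P(k|M,\alpha_k,\mathrm{clk})=0$ for every $k$. On the other hand, the hidden-variable decomposition conditioned on a click gives, with $\sum_k P(k|M,\lambda,\mathrm{clk})=1$,
\begin{equation}
\sum_k P(k|M,\alpha_k,\mathrm{clk}) \;\geq\; \sum_\lambda \min_k P(\lambda|\alpha_k,\mathrm{clk}) \;=\;\epsilon \;>\;0,
\end{equation}
exactly as in eq.~(\ref{eq:nogo}), producing the desired contradiction.

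The main obstacle, and really the only nontrivial point, is the bookkeeping of the conditioning on $\mathrm{clk}$: I need to justify carefully that $P(k|M,\lambda,\mathrm{clk})$ is a well-defined conditional probability that sums to one over $k$, and that the equality $P(k|M,\alpha_k)=0$ transfers to the conditional version (which is immediate because $P(\mathrm{clk}|M,\alpha_k)>0$ for any $|\alpha_k\rangle$ with positive click probability). The strict inequality $\delta_0>1-\sqrt{(d-1)/d}$ in the hypothesis is then seen to be exactly what is needed to absorb the $O(\alpha^2)$ correction in $\delta_0(d,\alpha^2)$ and still fit the $d$ coherent states inside the continuity ball; everything else is a verbatim repetition of the argument used to establish Theorem~1.
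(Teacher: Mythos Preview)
Your proposal is correct and follows essentially the same route as the paper's proof: the same coherent states $|\alpha_k\rangle$ and reference $|\alpha_0\rangle$, the same limiting argument $\alpha^2\to 0$ to accommodate the strict inequality $\delta_0>1-\sqrt{(d-1)/d}$, and the same contradiction between the quantum prediction $P(k|M,\alpha_k,\mathrm{clk})=0$ and the epistemic lower bound $\epsilon>0$. The only cosmetic difference is that the paper unpacks the conditioning via Bayes' rule, writing the key ratio as $P(k|\lambda)/P(\mathrm{clk}|\lambda)$ and showing it sums to $1$ over $k$, whereas you package the same identity as ``$\sum_k P(k|M,\lambda,\mathrm{clk})=1$''; these are the same statement, and your remark that the main bookkeeping issue is precisely justifying this conditional normalisation is exactly what the paper's explicit Bayes manipulation accomplishes.
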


\begin{proof}
Fix $\delta_0$. Consider $d$ preparations $Q_k$ ($k=1,\ldots,d$) corresponding to coherent states $|\alpha_k\rangle$ which are at distance $1-|\langle \tilde \alpha_k|\tilde\alpha_0 \rangle| > \delta_0$ from some reference coherent state $|{\alpha_0}\rangle$. We suppose that for all $k$, $P(\mathrm{clk}|\alpha_k)>0$. We define the subnormalised measure $\omega_{\mathrm{clk}}(\lambda)=\min_{k}P(\lambda|{\alpha_k},\mathrm{clk})$, with $P(\lambda|{\alpha_k},\mathrm{clk})=\frac{P(\mathrm{clk}|\lambda)P(\lambda|{\alpha_k})}{P(\mathrm{clk}|{\alpha_k})}$ given by Bayes rule. Because of $\delta_0$-continuity we have
\begin{equation}
\epsilon_0 = \min_\lambda \omega_{\mathrm{clk}}(\lambda)>0 \, .
\end{equation}
We then have the mathematical identity:
\begin{eqnarray}
\sum_{k} P(k|{\alpha_k},\mathrm{clk})
&=& \sum_{k} \frac{P(k|{\alpha_k})}{P(\mathrm{clk}|{\alpha_k})}
\nonumber \\
&=& \sum_{k} \sum_{\lambda}\frac{P(k|\lambda)}{P(\mathrm{clk}|\lambda)}\frac{P(\mathrm{clk}|\lambda)P(\lambda|{\alpha_k})}{P(\mathrm{clk}|{\alpha_k})}
\nonumber \\
&=& \sum_{k}\sum_{\lambda}\frac{P(k|\lambda)}{P(\mathrm{clk}|\lambda)}P(\lambda|{\alpha_k},\mathrm{clk})
\nonumber \\
&\geq& \sum_{k}\sum_{\lambda}\frac{P(k|\lambda)}{P(\mathrm{clk}|\lambda)}\omega_{\mathrm{clk}}(\lambda)
\nonumber \\
&=& \sum_{\lambda}\omega{}_{\mathrm{clk}}(\lambda) = \epsilon_0 > 0 \, .
\label{eq:pclick}
\end{eqnarray}
Note that the left-hand side of eq.~(\ref{eq:pclick}) is given experimentally by
\begin{equation}
\epsilon_\mathrm{expt} = \sum_{k} P(k|Q_k,\mathrm{clk}) = \sum_{k} \frac{N(k,Q_k)}{\sum_{j}N(j,{Q_k})} \, ,
\label{eq:epsexp}
\end{equation}
where $N(j,Q_k)$ is the number of clicks registered in outcome $j$ when one prepares state $\alpha_k$ and $\mathrm{clk}$ is the event that the detector clicks. If we take the states $|{\alpha_j}\rangle$ to be given by eq.~(\ref{eq:alphak}) and the state $|{\alpha_0}\rangle$ to be given by eq.~(\ref{eq:alpha0}), then by taking the parameter $\alpha^2\to 0$, we can take $\delta_0$ in the above argument arbitrarily close to $1-\sqrt{(d-1)/d}$, see eq.~(\ref{eq:deltaexpdalpha}). Since inefficient detectors will never click if there are zero photons, but have non-zero probability of clicking if there is at least one photon, we will have 
$N(k,Q_k)=0$ for all $k$ and ${\sum_{j}N(j,{Q_k})}>0$ for all $k$, and therefore $\epsilon_\mathrm{expt}=0$, in contradiction with eq.~(\ref{eq:pclick}).
\end{proof}

The above definitions and theorem lead us to define a class of $\psi$-epistemic models whose existence can be tested experimentally, even in the presence of losses and inefficient detectors. These models are labeled by the parameters $\delta_0$ and $\epsilon$.

\begin{definition} [$\delta_0\epsilon$-$\psi$-epistemic models]
Consider an arbitrary number of preparations $Q_k$ corresponding to distinct quantum states $|\psi_k\rangle$ all contained in a ball of radius $\delta_0$, where $\delta_0$ is given by equations (\ref{eq:delta0}) and (\ref{eq:tilde}). A $\delta_0\epsilon$-$\psi$-epistemic model is such that, for all choices of $Q_k$,
\begin{equation}
\sum_\lambda \min_{k}P(\lambda|Q_k) \geq \epsilon \, .
\end{equation}
\end{definition}

It is this class of models that are experimentally tested in our experiment.

Note that in appendix~\ref{sec:alternative} we present a second approach to deal with the no-click events, based on an analysis of the measurement process. 
This approach conserves the original $\delta$-continuity condition, but postulates that the predictions of the $\psi$-epistemic model are independent of how the measurement is realised, provided that the realisation gives the same statistics within quantum theory. Using this second approach, we obtain directly a bound on $\min_k P(\lambda|Q_k)$ rather than $\omega_{\mathrm{clk}}(\lambda)$.

\subsection{Preparation of mixed states}
\label{sec:mixed}

The output of the laser used in our experiment fluctuates, giving rise to the finite linewidth of the laser. The most important fluctuations are expected to arise from phase drift which is generally modeled as a random walk of the laser phase \cite{2008-Garrison}. The time scale of this phase drift is related to the laser linewidth by $\tau_{\mathrm{coh}}=\left(2\pi\Delta\nu\right)^{-1}\sim160\,\mu$s. Even though this time scale is much longer than the longest train of pulses (of length $80\times300\,\mathrm{ns}=24\,\mu$s), these fluctuations have an important impact on the interpretation of the experiment.

The consequence of these phase fluctuations is that the states prepared can be modeled as eq.~(\ref{eq:coherentphase}), where $y_j=\varphi_j-\varphi_{j-1}$ are independent identically distributed random variables with normal distribution
\begin{equation}
P(y_j) = \frac{1}{\sqrt{4\pi Dt_0}}e^{-\frac{y_j^2}{4Dt_0}} \, ,
\label{eq:disty}
\end{equation}
where $D=1/\tau_{\mathrm{coh}}$  is the diffusion constant \cite{1997-Scully} and $t_0$ is the time between centers of two time bins.

We take this model as basis for the analysis. Extensions, taking into account for instance phase drift within each time bin, or intensity fluctuations of the laser, are briefly discussed below. In order to understand the implications of fluctuations on the prepared state, we first discuss the interpretation of the experiment if the phases $\varphi_j$ were known. We then consider the  experimentally relevant case where the phases $\varphi_j$ are unknown.

If the phases $\varphi_j$ were known, then for each state $|\alpha_{k,\varphi}\rangle$ we could compute the $\delta_0$-distance to the reference state $|\alpha_0\rangle$. We would then keep only the data corresponding to the case where $\delta_0(\alpha_{k,\varphi},\alpha_0)$ is less than some threshold $\Delta_0$. For the states $|\alpha_{k,\varphi}\rangle$, we have (for an ideal experiment) that $P(k|\alpha_{k,\varphi})=0$, and hence the contradiction in eq.~(\ref{eq:pclick}) holds. For the subset of states with $\delta_0(\alpha_{k,\varphi},\alpha_0)<\Delta_0$, we can estimate the value of $\epsilon_\mathrm{expt}$ (which will be non-zero because of experimental imperfections) from the measurement data. The data then exclude $\delta_0$-continuous epistemic models with $\delta_0>\Delta_0$ and $\epsilon>\epsilon_\mathrm{expt}$.

If we do not know the phases $\varphi_j$, then we must make additional assumptions. Note that if one averages over the unknown phases $\varphi_j$, the state prepared by the device is a mixed state. This is problematic as the notions of $\psi$-epistemic and $\psi$-ontic models are defined for pure states. Therefore the no-go theorems do not apply directly. We reason around this difficulty as follows. There is in principle a simple modification of the experimental procedure that could be used to determine the phase of the laser: namely part of the laser light could be diverted and then measured. This additional measurement has not been carried out. But it is natural to assume that the probability distribution of ontic states $P(\lambda|Q_k)$ does not depend on whether or not these additional measurements are carried out. (We note that in our preparation procedure, a large part of the light is in fact diverted, and then absorbed, by the attenuators).

Making this assumption, we can consider that the state preparation yields pure states of the form eq.~(\ref{eq:coherentphase}) with small, random, phases affecting each time bin. We do not know what are the values of the phases, but we can determine (using numerical Monte-Carlo simulations, or analytical calculations) the probability distribution of $\delta_0(\alpha_{k,\varphi},\alpha_0)$. From this probability distribution we can estimate the probability $q$ that $\delta_0$ is less than a specific value $\Delta_0$: $P[\delta_0(\alpha_{k,\varphi},\alpha_0)\leq\Delta_0] = q(\Delta_0)$. Equivalently we know what proportion $q$ of prepared states had $\delta_0(\alpha_{k,\varphi},\alpha_0)\leq\Delta_0(q)$. We also know the experimentally determined value of $\epsilon_\mathrm{expt}$. However this value is an average over all prepared states. For a conservative estimate we make the worst case assumption that all the contribution to $\epsilon_\mathrm{expt}$ comes from the states with $\delta_0(\alpha_{k,\varphi},\alpha_0)\leq\Delta_0$. This implies that we must make the substitution $\epsilon_\mathrm{expt}\to\epsilon_\mathrm{expt}(q)= \epsilon_\mathrm{expt}/q$. This is the price to pay for not having experimentally measured the phases $\varphi_j$. The data then exclude $\delta_0$-continuous epistemic models with $\delta_0>\Delta_0$ and $\epsilon>\epsilon_\mathrm{expt}(q)$. Note that we can vary the parameters $q$ and $\Delta_0$ to exclude a region as large as possible in the $\delta_0,\epsilon$ plane.

In practice we proceed as follows. We fix the dimension $d=3,\,10,\,30,\,50,\,80$. We fix $k\in\{1,\ldots,d\}$. We choose at random variables $y_j$, $j=1,\ldots,d$, drawn from the distribution eq.~(\ref{eq:disty}). To these variables we associate the state $|\alpha_{k,\varphi}\rangle$ defined in eq.~(\ref{eq:alphaphi}). We then compute the distance $\delta_0(\alpha_{k,\varphi},\alpha_0)=1-|\langle \tilde\alpha_{k,\varphi}|\tilde\alpha_{0}\rangle|$, where $|\alpha_{0}\rangle$ is given by eq.~(\ref{eq:alpha0}). We repeat the procedure $10^6/d$ times for each value of $k$. For simplicity we then average the resulting histograms over $k$, yielding a probability distribution for $\delta_0$: $P(\delta_0)=\frac{1}{d}\sum_{k=1}^d P[\delta_0(\alpha_{k,\varphi},\alpha_0)]$. From this numerically determined distribution we can compute with high precision the function $\Delta_0(q)$ given by $P((\delta_0\leq \Delta_0)=q$.

Finally we note that the states prepared by the laser may differ from the ideal state eq.~(\ref{eq:coherent}) in more ways than are modeled in eq.~(\ref{eq:coherentphase}). Such effects could include intensity fluctuations of the laser, or phase drift within each time bin. We could take them into account by using a better model of the laser output. However since the linewidth of a laser well above threshold is generally modeled as being entirely due to phase drift, and since the coherence time is much longer than the duration of one time bin, we expect that the above takes into account most of the effects due to uncertainty in the state preparation. We note that our procedure of ascribing all the contribution to $\epsilon_\mathrm{expt}$ from the states with $\delta_0(\alpha_{k,\varphi},\alpha_0)\leq\Delta_0$ is very conservative, and implies that the true value of $\epsilon_\mathrm{expt}$ is probably significantly smaller than the one we use.

Note that in appendix~\ref{sec:analytic} we present an analytic estimate of $E\left[\delta\right]$ that coroborates the numerical calculations of $\Delta_0(q)$ outlined above.

\section{Results}
\label{sec:results}

Our raw experimental results are reported in table~\ref{table:results} and fig.~\ref{fig:epsilon1}. Specifically we give the number $d$ of time bins, the measured fraction of clicks in the bin that should contain no photon, i.e. $\epsilon_\mathrm{expt}(d)=\sum_{k}\frac{N(k,Q_{k})}{\sum_{j}N(j,Q_{k})}$, and its statistical error.

\begin{table}[!h]
\centering
\begin{tabular}{c| c c c c c}
$d$ & 3 & 10 & 30 & 50 & 80 \\
\hline 
$\epsilon_\mathrm{expt} \times 10^3$ & $0.26$ & $0.45$ & $1.27$ & $1.62$ & $1.66$ \\
$\mathrm{err} \times 10^3$ & $\pm0.05$ & $\pm0.07$ & $\pm 0.18$ & $\pm 0.23$ & $\pm 0.28$
\end{tabular}
\caption{Experimental results. $d$ is the dimension of the quantum state space. The value of $\epsilon_\mathrm{expt}$, given by eq.~(\ref{eq:epsexp}), is directly measured. The last line gives the statistical uncertainty on $\epsilon_\mathrm{expt}$.}
\label{table:results}
\end{table}

\begin{figure}[!h]
\begin{center}
\includegraphics[scale=.45]{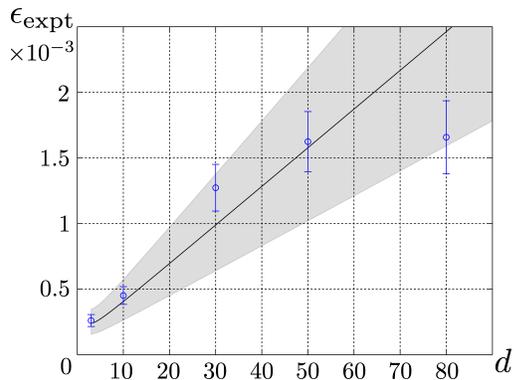}
\caption{Experimental bound on $\psi$-epistemic models as a function of dimension $d$ of the quantum state space. The vertical axis gives the measured value of $\epsilon_\mathrm{expt}=\sum_{k}\frac{N(k,Q_{\alpha_k})}{\sum_{j}N(j,Q_{\alpha_k})}$, where $N(j,Q_{\alpha_k})$ is the number of clicks registered in bin $j$ when one prepares state $\alpha_k$ (error bars are statistical). These values are also given in table~\ref{table:results}. The curve gives the dependency of $\epsilon_\mathrm{expt}$ on $d$ as predicted by quantum theory, taking into account the values of measured experimental parameters. The grey area gives the range in which this theoretical prediction could vary, given the uncertainty on dark-count rate, extinction ratio, and overall detection probability $\eta \langle n \rangle$. The main uncertainty comes from the dark-count rate which depends on the exact temperature of the detectors and the amount of ambient light, both of which can vary during the experiment. Positive deviation from the curve would signal a break-down of quantum theory. The absence of such deviation rules out a large class of $\delta$-continuous models.}
\label{fig:epsilon1}
\end{center}
\end{figure}

The fact that $\epsilon_\mathrm{expt}(d)$ is not strictly zero is expected, since the optical components are imperfect. We have estimated the expected values of $\epsilon_\mathrm{expt}(d)$ from the following measured experimental parameters: extinction ratio of the AOM, mean number of photons in each pulse, optical attenuation and detector efficiency, and detector dark counts. The probability of detecting a photon in bin $k$ when state $\alpha_k$ is prepared is approximately $\mathit{Dk} \, T_p + \mathit{Ext} \, \langle n\rangle \eta/(d-1)$, and the probability of detecting a photon when state $\alpha_k$ is prepared is approximately $\langle n\rangle \eta$, where $d$ is the dimension of the state, $\mathit{Dk} \, T_p$ is the probability of a dark count during a pulse, $\mathit{Ext}$ is the extinction ratio of the AOM, $\langle n \rangle$ is the mean number of photons in the pulse train, and $\eta$ is the overall detection efficiency. The ratio of these two quantities multiplied by $d$ yields the following estimate for the experimentally measured quantity $\epsilon_\mathrm{expt}$:
\begin{equation}
\mbox{Expected value of } \epsilon_\mathrm{expt} = \frac{\mathit{Dk}\,T_p}{\langle n\rangle \eta} d +\mathit{Ext}\frac{d}{d-1} \, .
\end{equation}
This expected value, including its uncertainty, is plotted in grey in fig.~\ref{fig:epsilon1}. Deviations from this expected behaviour could signal that quantum theory should be replaced by an epistemic model. The measured values of $\epsilon_\mathrm{expt}$, which are of the order $10^{-3}$, do not exhibit large deviations from the expected behaviour of $\epsilon_\mathrm{expt}(d)$.

These experimental results can be used to rule out a class of $\delta_0\epsilon$-$\psi$-epistemic models (see definition at the end of section~\ref{sec:detection}). Specifically we proceed as follows. Using the  procedure outlined at the end of section~\ref{sec:mixed}, we can determine for each dimension $d$ the function $\Delta_0(q)$. Specifically we choose a series of values of $0<q<1$, and compute the corresponding value of $\Delta_0(q)$. Then, for each $d$ and for each of these values of $q$, the models with $\delta_0 \geq \Delta_0(q)$ and $\epsilon \geq  \epsilon_\mathrm{expt}(d)/q$ are ruled out. Thus for each dimension $d$, we rule out a region in the $\delta_0,\epsilon$ plane. These results are given in fig.~\ref{fig:epsilon2}.

\begin{figure}[!h]
\begin{center}
\includegraphics[scale=.45]{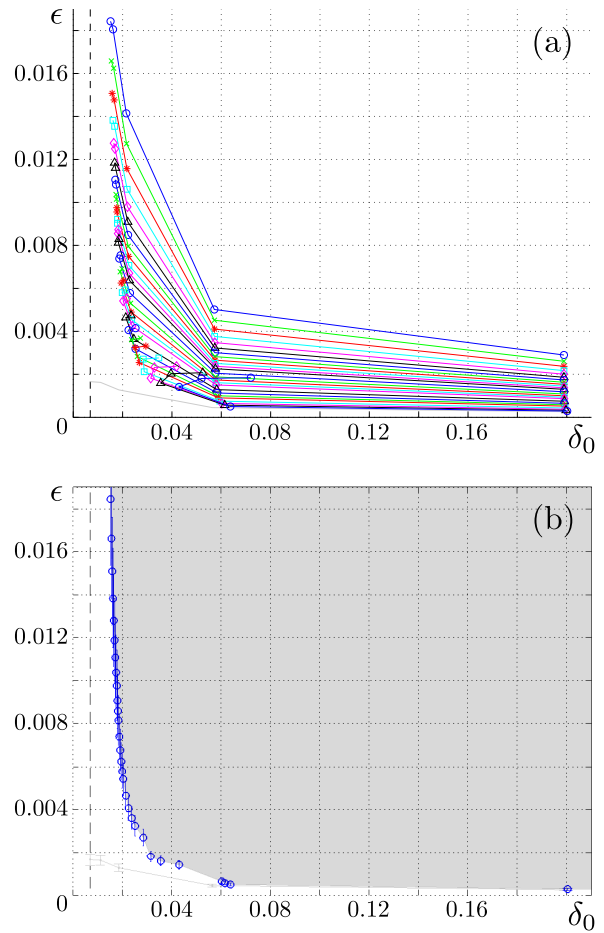}
\caption{(a) Experimentally excluded region in the $\delta_0,\epsilon$ plane. For each choice of the parameter $q$ we compute the values $\Delta_0(d,q)$ and $\epsilon_\mathrm{expt}(d)/q$ which are then plotted in the figure, and connected by a line. For curves from top to bottom, $q=\,$ 0.09, 0.10, 0.11, 0.12, 0.13, 0.14, 0.15, 0.16, 0.17, 0.18, 0.19, 0.20, 0.22, 0.24, 0.26, 0.28, 0.30, 0.35, 0.40, 0.45, 0.50, 0.60, 0.70, 0.80, 0.90. The couples $(\delta_0,\epsilon_\mathrm{expt}(d))$ that would be obtained if one did not take into account the phase fluctuations of the laser are plotted in light grey (lowest curve). For $d=3,\,10$, taking into account the phase fluctuations of the laser does not modify significantly the results, while for $d=30,\,50,\,80$ the effect is important. In fact the excluded region for $d=80$ is practically the same as for $d=50$ when the phase fluctuations are taken into account. (b) Here we take only the points that are most constraining. We plot them with their error bars. The grey zone is the area in the $\delta_0,\epsilon$ plane that is excluded by our experiment.}
\label{fig:epsilon2}
\end{center}
\end{figure}

In fig.~\ref{fig:epsilon2}~(a) we plot all the couples $(\Delta_0(d,q),\epsilon_\mathrm{expt}(d)/q)$ obtained by this procedure. For comparison we also plot the couples $(\delta_0,\epsilon_\mathrm{expt})$ that would be obtained if we did not take into account the phase fluctuations of the laser. For small $d$, taking into account the phase fluctuations has a very small effect on the results because over the duration of $3$ or $10$ time bins, the phase fluctuations have only increased very little the value of $\delta_0$. However when the number of time bins increases, the effect of the phase fluctuations becomes much more important, and this significantly affects the results. In fig.~\ref{fig:epsilon2}~(b) we plot only those couples $(\Delta_0(d,q),\epsilon_\mathrm{expt}(d)/q)$ which are most constraining, and give their statistical error.

\section{Discussion}

Whether the quantum wavefunction is a real physical wave or a summary of our knowledge about a physical system is a question that has divided physicists since the inception of quantum theory. A precise formulation of these two alternatives, opening the way to clear-cut answers, was provided by Harrigan and Spekkens \cite{2010-Harrigan}: if the wavefunction corresponds to a real, ontic, property of physical systems, the preparation of a system in different pure quantum states should always result in different physical states. If, on the other hand, the wavefunction has an epistemic status, such preparations should sometimes result in the same underlying physical state.

Following the breakthrough of PBR \cite{2012-Pusey}, a flurry of no-go theorems for $\psi$-epistemic models obeying natural constraints have recently been proposed \cite{2012-Patra,2012-Schlosshauer,2013-Hardy,2013-Miller,2012-Maroney,2012-Leifer,2013-Aaronson}. These no-go theorems inspire novel experimental tests. Here, we reported  a test of $\psi$-epistemic model based on the argument introduced in \cite{2012-Patra}.  There are two main motivations to perform such experiments. 

First, given that there are good reasons to support an epistemic view of the quantum state \cite{2007-Spekkens}, the no-go theorems provide new directions in which to look for potential deviations from the expected quantum predictions.
Our experimental results do not exhibit any such deviations, therefore strengthening our belief in the validity of quantum theory. 

A second, related, motivation for performing an implementation of the no-go theorems is to rule out experimentally (certain classes of) $\psi$-epistemic models, in the same way that the violations of Bell inequalities rule out locally-causal models. However, while such experiments have some common features with a Bell test, they also differ from it in several ways. To simplify the discussion, let us first consider the case of an ideal experiment free of experimental errors and noise. The proof of Theorem~1 tells us that if we prepare a system according to $d$ possible procedures $Q_k$ and subject it to a measurement $M$, then the observed value $\epsilon_{\mathrm{expt}}=\sum_k P(k|M,Q_k)$ provides a constraint on the extent $\epsilon$ to which the distributions over real states associated to each preparation overlap. In particular, if $\epsilon_{\mathrm{expt}}=0$ then no common real state $\lambda$ can be associated to all the preparations $Q_k$. This conclusion is obtained independently of what specific states (pure or mixed) are used, or what specific measurements are performed. It only depends on the observed measurement statistics, as does the violation of a Bell inequality.

However contrary to Bell inequalities, the observation of a value $\epsilon_{\rm{expt}}=0$ does not per se imply that some ``intrinsically quantum" or ``non-classical" behaviour has been produced in the experiment. Indeed, the quantity $\epsilon_{\rm{expt}}$ has not been introduced to distinguish between a ``classical" and a ``quantum" worldview (as in Bell inequalities), but between $\psi$-ontic and $\psi$-epistemic models. Thus for instance the $\psi$-epistemic model presented in \cite{2012-Lewis} perfectly reproduces all predictions of quantum mechanics.

Furthermore, the statistics of an experimental test of  $\psi$-ontic versus $\psi$-epistemic models could very easily be reproduced using purely classical states. For instance the use of $d$ classical states of the form $\rho_k=(1/d)\,\sum_{j\neq k} |j\rangle\langle j|$ instead of the states $|\psi_k\rangle$ in Theorem~1 would also yield a value of $\epsilon_{\rm{expt}}=0$. However, because these states are not pure, this experiment would not exclude  $\psi$-epistemic models.

This shows that when carrying out an experimental test of  $\psi$-ontic versus $\psi$-epistemic models, the kind of preparation procedures used in the experiment matters to its interpretation. Indeed, a $\psi$-epistemic model does not need to predict different results than quantum theory for all preparation procedures (for instance not for those associated to the purely-classical distributions $\rho_k$ over orthogonal states mentioned above), but only for those corresponding to pure quantum states that are sufficiently close (within distance $\delta$ in the case of $\delta$-continuous models). A meaningful test of $\delta$-continuous $\psi$-epistemic models must therefore be based on two components:  the measurement of  $\epsilon_{\rm{expt}}$ and a reasonable confidence that preparations corresponding to pure quantum states within distance $\delta$ have been used. This should be contrasted with Bell experiments that are "device independent": their interpretation are independent of the details of the state preparation and measurement procedures.

In this later respect, our experiment has some specific weaknesses related to the experimental system used (photonic time bins obtained by chopping and attenuating a cw laser). First, the use of coherent states that have a non-zero vacuum component and inefficient detectors resulting in no-click events require, to reach meaningful conclusions, the use of a fair-sampling assumption, a redefinition of the continuity parameter $\delta_0$, and a redefinition of the  quantity $\epsilon_{\mathrm{expt}}$. Second, we need an additional hypothesis on the epistemic model to ensure that the preparation used in the experiment yields (approximatively) pure coherent quantum states with a known overlap (see discussion in section~\ref{sec:mixed}). This is due to the fact that we did not explicitly check the actual performance of the preparation procedure by, e.g., performing a direct measurement of $\delta_0$. Such a verification would have required the use of a complex interferometer, and could have been performed only for very small dimension. Our approach was to use the very well understood physics of lasers operating well above threshold as a basis for modeling the quantum states produced by our preparation procedure. This allowed us to probe states in a much higher dimensional Hilbert space, and therefore small values of the continuity parameter $\delta_0$, than would have been possible otherwise.

Taking into account all the above constraints, our experiment nevertheless excludes, with a high degree of confidence, a large class of $\psi$-epistemic models. These $\psi$-epistemic models are labeled by two parameters, $\delta_0$ that describes how continuous the model is, and $\epsilon$ that describes how epistemic it is. Our experimental results exclude a region in the $\delta_0,\epsilon$ plane, see fig.~\ref{fig:epsilon2}.

Simultaneously with our work, an experimental test of $\psi$-epistemic models based on the original PBR no-go theorem using two ions in the same trap was reported in \cite{2012-Nigg}. Both experiments exclude (possibly different) classes of $\psi$-epistemic models. Both experiments require specific additional hypotheses to ensure that the prepared states are pure quantum states with desired properties (as discussed in the above paragraphs and in \cite{2012-Nigg}). Such additional hypotheses will probably be needed in any experimental test of $\psi$-epistemic models.

\begin{acknowledgments}
We thank anonymous referees for useful comments. We acknowledge financial support from the European Union under project QCS, from the FRS-FNRS under project DIQIP, and from the Brussels-Capital Region through a BB2B grant.
\end{acknowledgments}

\appendix

\section{Alternative treatment of the detection loophole}
\label{sec:alternative}

In this section we adopt an approach different from that followed in section~\ref{sec:detection} of the main text.

There are many different physical implementations of single-photon detectors which would give the same response to incoming light as the SSPD used in the experiment. Our main assumption is that for a given ontic state $\lambda$ the probabilities of outcomes are independent of how the detector is implemented, provided it gives the same measurement statistics as the actual one. This obviously constitutes a strong assumption on the behaviour of $\psi$-epistemic models. Its interest is that it allows us to keep the same continuity condition as in the main text, contrary to the reasoning in section~\ref{sec:detection}.

We suppose that the interaction of the detector with the electromagnetic field obeys photon-number superselection rule. That is, the excitations in the superconducting wire depend on the quanta of energy (or photon number) and not on the relative phases. With this hypothesis the response of the SSPD should be very well approximated by an ideal detector consisting of three steps:
\begin{enumerate}
\item a beam splitter, one arm of which absorbs a fraction $1-\eta$ of the coherent light, where $\eta$ is the efficiency of the actual detector;
\item the beam splitter is followed by a quantum non-demolition (QND) detector which measures the photon number (the result of this measurement is not provided to the experimenter);
\item the state at the output of the QND measurement is sent to a perfect single-photon detector with 100\% efficiency and dark-count rate $Dk$ which provides as output the first time bin in which a photon is detected.
\end{enumerate}
Such a model for a single-photon detector is standard in quantum optics \cite{1997-Leonhardt,2010-Kok}. Note that the actual SSPD may differ slightly from the above idealisation (for instance the attenuation may not be linear). Such small changes could be implemented by changing slightly the model, and do not affect our conclusions.

From the point of view of quantum theory, step 1 results in the preparation of an attenuated coherent state with photon mean number $\eta \langle n \rangle$, and step 2 results in the preparation of a mixture $\rho_k = \sum_n p_n \rho_k^n$ of states $\rho_k^n$ with photon number $n$, if the initial state was $|\alpha_k\rangle$. Note that the states $\rho_k^n$ are pure states. In particular the state $\rho_k^1 = |\psi_k\rangle\langle \psi_k|$, where $|\psi_k\rangle$ corresponds to the state given in the main text.

Now we make the assumption that, from the point of view of $\psi$-epistemic models, steps 1 and 2 can be thought of as additional state preparation steps. That is, the outcome of these steps is the preparation of a new ontic state $\lambda'$. Following \cite{2005-Spekkens}, we assume that the ontic distribution corresponding to a statistical mixture $\rho_k = \sum_n p_n \rho_k^n$ of states $\rho_k^n$  is of the form $\sum_n p_n P(\lambda|\rho_k^n)$. With these assumptions, we have the following identity for the predictions of $\psi$-epistemic models:
\begin{eqnarray*}
\sum_{k=1}^d P(k|\mathrm{clk},Q_{\alpha_k})
&=& \frac{1}{P(\mathrm{clk})}\sum_{k=1}^d \sum_{n=0}^\infty P(k|{\rho_k^n})P(n|{\alpha_k}) \\
&=& \frac{1}{P(\mathrm{clk})}\sum_{k=1}^d \sum_{n=0}^\infty \sum_\lambda P(k|\lambda)P(\lambda|{\rho_k^n}) \\
&\geq& \frac{p_1}{P(\mathrm{clk})} \sum_{k=1}^d \sum_\lambda P(k|\lambda)P(\lambda|{\rho_k^1}) \, ,
\end{eqnarray*}
where $n$ is the result of the QND measurement and $p_1=P(n=1|\alpha_k)$ is the probability for the QND measurement of detecting $n=1$ photons (in the last line we retain only the term $n=1$). Note that the $n=0$ component in line 2 only contributes through the effect of dark counts.

We now restrict our analysis to $\psi$-epistemic models associated to the single-photon states $\rho_k^1$. We assume $\delta$-continuity of $\psi$-epistemic models (using the definition given in the main text). If the states $\rho_k^1$ are $\delta$-close, then we have the bound $\sum_\lambda P(k|\lambda)P(\lambda|{\rho_k^1})>\epsilon>0$, as in the main text.

In summary the assumptions made above lead to the bound
\begin{equation*}
\sum_{k=1}^d P(k|\mathrm{clk},Q_{\alpha_k}) \geq \frac{p_1}{P(\mathrm{clk})}  \epsilon \, .
\end{equation*}
This is the same bound as obtained in the main text, except for the factor $\frac{p_1}{P(\mathrm{clk})} \simeq 0.90$ (where we take into account that $\langle n\rangle = 0.2$ and neglect dark counts).

\section{Analytic analysis of imperfect state preparation}
\label{sec:analytic}

As discussed in section~\ref{sec:mixed}, the presence of additional parameters $\varphi_j$ arising from phase fluctuations of the laser increases the distance $\delta_0$ with the reference state. The increase in distance is obtained from the following expressions (which we first give for the case where the state contains one photon):
\begin{eqnarray*}
|\psi \rangle &=& \frac{1}{\sqrt{d}} \sum_{j=1}^d |j\rangle \\
|\psi_{k ,\{\varphi_j\}} \rangle &=& \frac{1}{\sqrt{d-1}} \sum_{j\neq k}e^{i\varphi_j} |j\rangle \\
\left|E_\varphi\left[\langle \psi|\psi_{k,\{\varphi_j\}}\rangle \right]\right|^2 &\simeq& 1-\frac{1}{d}-\frac{d \Delta\varphi^2}{4} \, ,
\end{eqnarray*}
where $E_\varphi$ is the expectation over the random phases $\varphi$, and we give the leading order effect of the random phases (for $d$ large and $\Delta \varphi^2$ small). We thus find 
\begin{equation*}
E_\varphi\left[\delta\right] = 1-\sqrt{(d-1)/d}+\frac{d\Delta\varphi^2}{8} \, .
\end{equation*}
Now we insert this expression into eq.~(\ref{eq:deltatilde}) to obtain
\begin{eqnarray*}
E_\varphi\left[\delta(d,\alpha^2)\right]
&=& 1-E_\varphi\left[|\langle\tilde\alpha_{k,\{\varphi_j\}}|\tilde\alpha_0\rangle|\right] \\
&=& 1-\frac{e^{\alpha^2\sqrt{(d-1)/d}} \left(1-\alpha^2\frac{d \Delta\varphi^2}{8}\right) -1}{e^{\alpha^2}-1} \, .
\end{eqnarray*}
From this expression we see that $E\left[\delta\right]$ reaches a minimum when $d \simeq 40$. For larger $d$, the distance $\delta$ between the states is dominated by the random phases, rather than the presence or absence of time bin $k$. This is indeed what is observed by using the exact numerical method in the main text.


\begin{thebibliography}{99}
\bibitem{2004-Paris} M. G. A. Paris and J. Rehacek (Eds.), \textit{Quantum State Estimation}, Lect. Notes Phys. \textbf{649}, (Springer, Berlin) (2004).
\bibitem{1993-Aharonov} Y. Aharonov and L. Vaidman, \textit{Measurement of the Schr{\"o}dinger wave of a single particle}, Phys. Lett. A \textbf{178}, 38 (1993).
\bibitem{2002-Caves} C. M. Caves, C. A. Fuchs, and R. Schack, \textit{Quantum probabilities as Bayesian probabilities}, Phys. Rev. A \textbf{65}, 022305 (2002).
\bibitem{2003-Brukner} C. Brukner and A. Zeilinger, \textit{Information and fundamental elements of the structure of quantum theory}, in ``Time, Quantum, Information'', edited by L. Castell and O. Ischebeck, Springer (2003).
\bibitem{2007-Spekkens} R. W. Spekkens, \textit{Evidence for the epistemic view of quantum states: A toy theory}, Phys. Rev. A \textbf{75}, 032110 (2007).
\bibitem{2010-Harrigan} N. Harrigan and R. W. Spekkens, \textit{Einstein, Incompleteness, and the Epistemic View of Quantum States}, Found. Phys. \textbf{40}, 125 (2010). 
\bibitem{2012-Pusey} M. F. Pusey, J. Barrett, and T. Rudolph, \textit{On the reality of the quantum state}, Nat. Phys. \textbf{8}, 476 (2012).
\bibitem{2012-Patra} M. K. Patra, S. Pironio, and S. Massar, \textit{No-go theorems for $\psi$-epistemic models based on a continuity assumption}, arXiv:1211.1179 (2012).
\bibitem{2012-Lewis} P. G. Lewis, D. Jennings, J. Barrett, and T. Rudolph, \textit{Distinct Quantum States Can Be Compatible with a Single State of Reality}, Phys. Rev. Lett. \textbf{109}, 150404 (2012).
\bibitem{2012-Schlosshauer} M. Schlosshauer and A. Fine, \textit{Implications of the Pusey-Barrett-Rudolph Quantum No-Go Theorem}, Phys. Rev. Lett. \textbf{108}, 260404  (2012).
\bibitem{2013-Hardy} L. Hardy, \textit{Are quantum states real?}, Int. J. Mod. Phys. B \textbf{27}, 1345012 (2013).
\bibitem{2013-Miller} D. J. Miller, \textit{Alternative experimental protocol to demonstrate the Pusey-Barrett-Rudolph theorem}, Phys. Rev. A \textbf{87}, 14103 (2013).
\bibitem{2012-Maroney} O. J. Maroney, \textit{How statistical are quantum states?}, arXiv:1207.6906 (2012).
\bibitem{2012-Leifer} S. M. Leifer and O. J. Maroney, \textit{Maximally epistemic interpretations of the quantum state and contextuality}, arXiv:1208.5132 (2012).
\bibitem{2013-Aaronson} S. Aaronson, A. Bouland, L. Chua, and G. Lowther, \textit{Psi-Epistemic Theories: The Role of Symmetry}, arXiv:1303.2834 (2013).
\bibitem{2012-Nigg} D. Nigg, T. Monz, P. Schindler, E. A. Martinez, M. Chwalla, M. Hennrich, R. Blatt, M. F. Pusey, T. Rudolph, and J. Barrett, \textit{Can different quantum state vectors correspond to the same physical state? An experimental test}, arXiv:1211.0942 (2012).
\bibitem{1998-Ribordy} G. Ribordy, J.-D. Gautier, N. Gisin, O. Guinnard, and H. Zbinden, \textit{Automated ``plug \& play'' quantum key distribution}, Electron. Lett. \textbf{34}, 2116 (1998).
\bibitem{2005-Stucki} D. Stucki, N. Brunner, N. Gisin, V. Scarani, and H. Zbinden, \textit{Fast and simple one-way quantum key distribution}, Appl. Phys. Lett. \textbf{87}, 194108 (2005).
\bibitem{2010-Leo} F. Leo, S. Coen, P. Kockaert, S.-P. Gorza, P. Emplit, and M. Haelterman, \textit{Temporal cavity solitons in one-dimensional Kerr media as bits in an all-optical buffer}, Nature Photonics \textbf{4}, 471 (2010).
\bibitem{2008-Garrison} G. C. Garrison and R. Y. Chiao, \textit{Quantum Optics}, Oxford University Press (2008).
\bibitem{1997-Scully} M. O. Scully and M. S. Zubairy, \textit{Quantum Optics}, Cambridge University Press, Cambridge (1997).
\bibitem{1997-Leonhardt} U. Leonhardt, \textit{Measuring the Quantum State of Light}, Cambridge University Press, Cambridge (1997).
\bibitem{2010-Kok} P. Kok and B. W. Lovett, \textit{Introduction to Optical Quantum Information Processing}, Cambridge University Press, Cambridge (2010).
\bibitem{2005-Spekkens} R. W. Spekkens, \textit{Contextuality for preparations, transformations, and unsharp measurements}, Phys. Rev. A \textbf{71}, 052108 (2005).



\end{thebibliography}
\end{document}